\newtheorem{theorem}{Theorem}
\theoremstyle{plain}
\newtheorem{corollary}{Corollary}
\newtheorem{example}{Example}
\newtheorem{lemma}{Lemma}
\newtheorem{remark}{Remark}
\numberwithin{equation}{section}
\begin{document}
\title{On the regularity of $\{\lfloor\log_b(\alpha n+\beta)\rfloor\}_{n\geq0}$}
\author{Jiemeng Zhang, Yingjun Guo and Zhixiong Wen}
%\address[A. One and A. Two]{Author OneTwo address line 1\\
%Author OneTwo address line 2}
%\email[A. One]{aone@aoneinst.edu}
%\urladdr{http://www.authorone.oneuniv.edu}

\begin{abstract}
Let $\alpha,\beta$ be real numbers and $b\geq2$ be an integer. Allouche and Shallit showed that the sequence $\{\lfloor\alpha n+\beta\rfloor\}_{n\geq0}$ is $b$-regular if and only if $\alpha$ is rational. In this paper, using a base-independent regular language, we prove a similar result that the sequence $\{\lfloor\log_b(\alpha n+\beta)\rfloor\}_{n\geq0}$ is $b$-regular if and only if $\alpha$ is rational. In particular, when $\alpha=\sqrt{2},\beta=0$ and $b=2$, we answer the question of Allouche and Shallit  that the sequence $\{\lfloor\frac{1}{2}+\log_2n\rfloor\}_{n\geq0}$ is not $2$-regular, which has been proved by Bell, Moshe and Rowland respectively.
\end{abstract}
%\begin{keyword}
%Paperfolding sequence \sep   Hankel determinant \sep Irrationality exponent \sep  Automatic sequence
%\end{keyword}

\maketitle

%\address[A. One and A. Two]{Author OneTwo address line 1\\
%Author OneTwo address line 2}
%\email[A. One]{aone@aoneinst.edu}
%\urladdr{http://www.authorone.oneuniv.edu}

\section{Introduction}
Automatic sequence is studied by many authors both in formula language theory and number theory. It has many descriptions \cite{AS03,CKM,Cob,GW14}.  One of them is that its $k$-kernel is finite. Precisely, we call a sequence $\{u_n\}_{n\geq0}$ is \emph{$k$-automatic} if the set of subsequences  $\{\{u_{k^in+j}\}_{n\geq0}:i\geq0,0\leq j\leq k^i-1\}$ is finite.

But automatic sequences are defined over a finite alphabet. To overcome this limit,  Allouche and Shallit \cite{Jp,jp} generalized the concept of automatic sequence to regular sequence where the sequences may take infinitely many values. We call a sequence $\{u_n\}_{n\geq0}$ is \emph{$k$-regular} if the $R$-module generated by the set $\{\{u_{k^in+j}\}_{n\geq0}:i\geq0,0\leq j\leq k^i-1\}$ is finitely generated, where $R$ is a commutative Noetherian ring.

Many properties of regular sequence have been studied \cite{Jp,jp,Minimal,Note,Sum,AFO}. In \cite{Jp}, Allouche and Shallit proved that a sequence is $k$-regular and takes finitely many values if and only if it is $k$-automatic. Moreover, if $\{u_n\}_{n\geq0}$ is a  $k$-regular sequence,  they showed that there exists a constant $c>0$ such that $u_n=\mathcal{O}(n^c)$. If $\{u_n\}_{n\geq0}$ is unbounded, then, Bell, Coons and Hare \cite{Minimal} showed that there exists a constant $c>0$ such that $|u_n|>c\log n$ infinitely often.  Let $\{u_n\}_{n\geq0}$ be an integer sequence and there exist $c,c_1,c_2>0$ such that $c_1\log n<|u_n|\leq c_2 n^c$. It is interesting to determine the regularity of $\{u_n\}_{n\geq0}$.

In 2003, Allouche and Shallit \cite{jp} listed many examples. One typical example is that the sequence $\{\lfloor\alpha n+\beta\rfloor\}_{n\geq0}$ is $b$-regular if and only if $\alpha$ is rational, where~$\alpha,\beta\in\mathbb{R}$ and $b\geq 2$ is an integer. And at the end of this paper, they asked a question: is the sequence $\{\lfloor\frac{1}{2}+\log_2 n\rfloor\}_{n\geq0}$ 2-regular? Until 2008, Bell \cite{Bell}, Moshe \cite{YM} and Rowland \cite{Row} answered this question in different way. Moreover, Bell \cite{Bell} and Moshe \cite{YM} showed the sequence~$\{\lfloor\log_b\alpha n\rfloor\}_{n\geq0}$ is $b$-regular if and only if~$\alpha\in\mathbb{Q}$.

In this paper, we give a general result and state it as follows.

\begin{theorem}\label{main}
Let~$\alpha,\beta\in\mathbb{R}$ and $b\geq 2$ be an integer.   The sequence $\{\lfloor\log_b(\alpha n+\beta)\rfloor\}_{n\geq0}$ is $b$-regular if and only if $\alpha$ is rational.
\end{theorem}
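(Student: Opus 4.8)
The plan is to reduce the $b$-regularity of $u_n=\lfloor\log_b(\alpha n+\beta)\rfloor$ to a combinatorial property of its jump set, which will encode the base-$b$ expansion of $1/\alpha$. First I would record the kernel identity: for $i\ge 0$ and $0\le j<b^i$,
\[
u_{b^in+j}=\left\lfloor\log_b\!\big(b^i\alpha n+(\alpha j+\beta)\big)\right\rfloor=i+\big\lfloor\log_b(\alpha n+\gamma_{i,j})\big\rfloor,\qquad \gamma_{i,j}=\frac{\alpha j+\beta}{b^i}.
\]
Thus, modulo the constant sequence, the $b$-kernel is the family $w^{(\gamma)}:=\big(\lfloor\log_b(\alpha n+\gamma)\rfloor\big)_{n\ge0}$ with $\gamma$ ranging over $S=\{\gamma_{i,j}\}$, and $u$ is $b$-regular precisely when these, together with the constant sequence, generate a finitely generated $R$-module. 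Since $\log_b(\alpha n+\gamma)-\log_b(\alpha n+\gamma')\to 0$, every difference $w^{(\gamma)}-w^{(\gamma')}$ is eventually $\{-1,0,1\}$-valued, so the whole question is governed by bounded jump sequences. Throughout I assume $\alpha>0$ (the cases $\alpha\le 0$ being degenerate or eventually undefined) and discard the finitely many $n$ with $\alpha n+\beta\le 0$, replacing $u$ by a finite modification.

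The decisive object is the jump set. Put $e_n=u_{n+1}-u_n$; monotonicity and $\log_b\!\big(1+\alpha/(\alpha n+\beta)\big)\to 0$ give $e_n\in\{0,1\}$ for large $n$, with $e_n=1$ exactly when some power $b^k$ lies in $(\alpha n+\beta,\alpha(n+1)+\beta]$, which for each large $k$ selects a single index $N_k=\lceil(b^k-\beta)/\alpha\rceil-1$. Hence, up to a finite modification, $e$ is the characteristic sequence of $\{N_k\}_{k\ge k_0}$, and $N_k=\lfloor b^k/\alpha\rfloor+O(1)$, whose base-$b$ representation is a bounded perturbation of the length-$k$ prefix of the expansion of $1/\alpha$; concretely $N_{k+1}=bN_k+d_{k+1}+O(1)$, where $d_{k+1}$ is the $(k+1)$-st base-$b$ digit of $1/\alpha$. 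This set of representations is the base-independent regular language promised in the abstract.

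For the forward direction, suppose $\alpha=p/q\in\mathbb{Q}$. Then $1/\alpha=q/p$ has an eventually periodic base-$b$ expansion, so $(d_i)$ is eventually periodic, and the corrections, governed by $qb^k\bmod p$, are eventually periodic in $k$ as well. Consequently $\{N_k\}$ is recognized by a finite automaton reading base-$b$ digits, and the same periodicity shows that each difference $w^{(\gamma)}-w^{(\beta)}$, $\gamma\in S$, is $b$-automatic and that only finitely many distinct patterns arise. Together with the constant sequence these span a finitely generated module, so $u$ is $b$-regular. Equivalently, I would exhibit a finite automaton that, on the base-$b$ expansion of $n$, tracks the eventually periodic carries of the long division computing $\lfloor q(b^k-\beta)/p\rfloor$.

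For the converse I argue the contrapositive. Suppose $\alpha$ is irrational but $u$ is $b$-regular. Using the standard closure of $b$-regular sequences under the shift $n\mapsto n+1$ and termwise subtraction, $e_n=u_{n+1}-u_n$ is $b$-regular; as it takes only finitely many values it is $b$-automatic by the Allouche--Shallit theorem \cite{Jp}. By the previous paragraph this makes the lacunary set $\{N_k\}$, hence the base-$b$ expansion of $1/\alpha$, $b$-recognizable. The crux is then a rigidity statement: \emph{if $\{N_k\}$ is $b$-automatic, then the base-$b$ expansion of $1/\alpha$ is eventually periodic}, forcing $1/\alpha\in\mathbb{Q}$ and contradicting the irrationality of $\alpha$. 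I would establish this by a pumping argument on the minimal automaton for $\{N_k\}$: reading the genuine prefix $d_1\cdots d_j$ of $1/\alpha$ must reach a state from which, for every length, the unique accepted continuation is dictated by the later digits, so that a repeated state $s_j=s_{j'}$ forces the block $d_{j+1}\cdots d_{j'}$ to recur. Making this precise — controlling the $O(1)$ corrections from $\beta$ and from the ceilings, and excluding spurious accepted words that could break the forced periodicity — is the main obstacle, and it is exactly where the irrationality of $\alpha$ is used decisively.
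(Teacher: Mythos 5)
Your reduction is the same as the paper's: pass to the difference sequence $e_n=u_{n+1}-u_n$, observe it is eventually the characteristic sequence of the jump set $\{N_k\}$ with $N_k=\lfloor (b^k-\beta)/\alpha\rfloor$, and translate $b$-regularity of $u$ into $b$-recognizability of $\{N_k\}$. The forward direction (rational $\alpha$ gives eventual periodicity of the corrections $qb^k \bmod p$, hence recognizability) is sound and matches the paper. But the converse — your ``rigidity statement'' that recognizability of $\{N_k\}$ forces eventual periodicity of the expansion of $1/\alpha$ — is exactly where the paper's substantive work lies, and you have left it as an acknowledged obstacle rather than closed it. Two things are missing. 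First, your recursion $N_{k+1}=bN_k+d_{k+1}+O(1)$ understates the problem: the paper computes the correction exactly as $r_k=\lfloor b\{(b^k-\beta)/\alpha\}+(b-1)\{\beta/\alpha\}\rfloor$, which takes values in $\{0,\dots,2b-2\}$, so $N_k=[N_1 r_1\cdots r_{k-1}]_b$ is a base-$b$ evaluation of a string whose ``digits'' can exceed $b-1$ and therefore generate carries. The paper isolates this as a standalone result (Theorem 2): for any digit sequence $\mathbf{u}$ over $\{0,\dots,B-1\}$, the language $\{([u_0\cdots u_n]_b)_b : n\ge 0\}$ is regular iff $\mathbf{u}$ is ultimately periodic. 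Its proof is not a single pumping step but a two-stage argument: one pumping application shows the accepted lengths form an arithmetic progression $s_{n+1}=s_n+p$ and that long words have the shape $W_0W_1^mW_2$ with finitely many possible $(W_0,W_1,W_2)$; a second pumping application, applied to the preimage $\sigma^{-1}(L)$ under a marking morphism, then forces each residue class of $\{u_{s_n+j}\}$ to be ultimately periodic. Your single ``repeated state forces the block to recur'' sketch does not by itself exclude the spurious accepted continuations you mention, nor does it control the carries.

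Second, even granting periodicity of the correction sequence $\{r_k\}$, you still need to convert that into periodicity of the actual base-$b$ digits $\alpha_i$ of $1/\alpha$. This is the content of the paper's Lemmas 1--4: each $r_k$ equals one of $\alpha_{k+1}$, $\alpha_{k+1}-1$, $b+\alpha_{k+1}$, $b+\alpha_{k+1}-1$ depending on two inequalities $P_k,P_{k+1}$ comparing tails of the expansions of $1/\alpha$ and $\beta/\alpha$, and an induction (with a delicate subcase when $r_k=b+\alpha_{k+1}-1$ and $\alpha_{k+1}=0$, where a $+1$ carry must be tracked) shows $([r_1\cdots r_k]_b)_b$ recovers $\alpha_2\cdots\alpha_{k+1}$ up to a bounded perturbation. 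Your $O(1)$ bookkeeping waves at this but does not do it. One smaller conceptual correction: irrationality of $\alpha$ is not ``used decisively'' in the rigidity step; the paper proves the unconditional equivalence (language regular $\Leftrightarrow$ $\{r_k\}$ ultimately periodic $\Leftrightarrow$ $\alpha\in\mathbb{Q}$), and irrationality enters only as the hypothesis being contradicted at the very end.
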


Clearly, for any $\alpha,\beta\in\mathbb{R}$ and $b\geq 2$, there exist $c_1,c_2>0$ s.t., $c_1\log n<\lfloor\log_b \alpha n+\beta\rfloor\leq c_2 n$. Taking $\beta=0$, we obtain Bell and Moshe's reuslt. Moreover, taking $\alpha=\sqrt{2},b=2$, we answer the question of Allouche and Shallit. %Note that the mail tool of Bell's proof is a characterization of $k$-regular sequence in terms of \emph{shuffles}.
Different from previous proof, we construct a language by base-changing and prove that its regularity is independent of the base.

The organization of this paper is as follows. In setion 2, we recall some definitions and notation. In section 3, using base-changing, we construct a language and prove that its regularity is independent of the base. In section 4, we prove Theorem \ref{main}.

\iffalse

Let $\mathbf{u}=\{u_n\}_{n\geq0}$ and $\mathbf{v}=\{v_n\}_{n\geq0}$ be two integer sequences. Their \emph{convolution}, denoted $\mathbf{u}*\mathbf{v}=\{w_n\}_{n\geq0}$, is defined as follows:
$w_n:=\sum_{i+j=n}u_iv_j.$
Allouche and Shallit \cite{Jp} showed that
the set of $k$-regular sequences is closed under convolution. In particular, if $\{u_n\}_{n\geq0}$ is regular, so is its running sum $\{\sum_{i=0}^{n}u_i\}_{n\geq0}.$

\begin{theorem}\label{main2}
Let~$p,q\in\mathbb{Z},~(p,q)=1,~q\neq0,~\alpha\in\mathbb{R},~b\geq2$~be an integer,
and~$u_n=\lfloor \frac{p}{q}\log_b(\alpha n+\beta)\rfloor$.
Then the sequence~$(u_n)_{n\geq0}$~is~$b$-regular if and only if~$\alpha\in\mathbb{Q}$~and there exists an integer~$m$~satisfying~$b^q=m^p$.
\end{theorem}

It is easy to see that Theorem \ref{main} is a corollary of Theorem \ref{main2}, when~$\frac{p}{q}=1$~in
Theorem \ref{main2}.
\fi

%In section 5, we prove Theorem \ref{main2}. In section 6, we will study the regularity of some special functions corresponding to logarithmic functions.

\section{Preliminary}
In this section, we recall some definitions and notation. More details, please see \cite{AS03,VM,F}.

For any integer $b\geq2$, set $\Sigma_b :=\{0,1,\cdots,b-1\}$. Let~$\Sigma_b^k$ be
the set of words of length~$k$ over~$\Sigma_b$ and~$\Sigma_b^{\ast}=\bigcup_{k%
\geq0}\Sigma_b^k$. If $w\in\Sigma_b^{\ast}$, then its length is denoted  by $|w|$. If $|w|=0$, then $w$ is called to be the empty word, denoted by $\epsilon$. For any finite words $u=u_0u_1\cdots u_m,v=v_0v_1\cdots v_n$, their \emph{concatenation}, denoted by $uv$, is $u_0u_1\cdots u_mv_0v_1\cdots v_n.$  For any word $u,v\in\Sigma_b^{\ast}$ with $|v|\geq1$, we always call that the word $v$ is a \emph{suffix}  of word $uv$. Clearly that the set $\Sigma_b^{\ast}$ together with concatenation forms a monoid, where the empty $\epsilon$ playing the part of the identity element.

Let $n\geq0$ be an integer, then there is an unique representation of the form $n=\sum_{i=0}^{m}u_i b^i$ with $u_m\neq0$ and $u_i\in\Sigma_b$. We call $u_mu_{m-1}\cdots u_0$ is its~\emph{$b$-ary expansion}, denoted by $(n)_b$. Conversely, for any finite word $W=w_0w_1\cdots w_n$, we define an integer $[W]_b=[w_0w_1\cdots w_n]_b:=\sum_{i=0}^{n}w_i b^{n-i}.$ Note that for any integer $n\geq 0$, $[(n)_b]_b=n$. For any real number $r$, the symbols $\lfloor r\rfloor, \{r\}$ denote its integral part and its fraction part respectively.

A subset of $\Sigma_b^{\ast}$ is called to be a language. Since a language is a set of words, we can apply all the usual set operations
like union, intersection or set difference: $\cup,\cap$ or $\setminus$. If $L_1,L_2$ are languages, then their \emph{concatenation} is $L_1L_2:=\{uv:u\in L_1,v\in L_2\}.$ Set $L^0=\{\epsilon\}$, then the \emph{Kleene star} of the language $L$ is defined as $L^*:=\bigcup_{n\geq0}L^n.$
If a language $L$ can be obtained by applying to some finite languages: a finite number of operations of union, concatenation and Kleene star, then this language is said to be a \emph{regular language}.  Note that any finite set is a regular language and regular language is closed under intersections, unions and complements\footnote{Let~$L\subseteq\Sigma^{\ast}$~be a language, its complement is defined as~$\overline{L}
=\Sigma^{\ast}\backslash L$.}. Moreover, if $\sigma$ is a homomorphism and $L$ is regular, then
$$\sigma(L):=\{\sigma(w):w\in L\},\quad \sigma^{-1}(L):=\{w:\sigma(w)\in L\}$$ are both regular. %For more closure properties of regular language, please see \cite{AS03}.

A set~$S\subseteq\mathbb{N}$~of integers is~\emph{$b$-recognisable} if the language $\{(n)_b:~n\in S\}$ is regular.
Observe that a set~$S$~is~$b$-recognisable if and only if its characteristic word
$$\mathcal{X}_S(n)=\left\{
    \begin{array}{ll}
      1, & \hbox{if~$n\in S$;} \\
      0, & \hbox{otherwise.}
    \end{array}
  \right.
$$ is
~$b$-automatic.

The following lemma, known as the pumping lemma, gives an important description of regular language.
\begin{lemma}[Pumping Lemma]
Let~$L\subseteq\Sigma_b^{\ast}$~be a regular language. Then
there exists a constant~$n\geq1$ such that for all strings~$Z\in L$~with~$|Z|\geq n$, there exists
a decomposition~$Z=UVW$, where~$U,V,W\in\Sigma_b^{\ast}$~and~$|UV|\leq n$~and~$|V|\geq1$,
such that~$UV^iW\in L$~for all~$i\geq0$.
\end{lemma}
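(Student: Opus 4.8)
The plan is to pass from the operational definition of a regular language to the automaton model and then run a pigeonhole argument on the computation. First I would invoke Kleene's theorem (classical; see the references \cite{AS03,VM,F}), which guarantees that every regular language $L\subseteq\Sigma_b^{\ast}$ is accepted by some deterministic finite automaton $M$ with a finite state set $Q$. I then set the pumping constant to be $n:=|Q|$, the number of states of $M$. Given any $Z=z_0z_1\cdots z_{m-1}\in L$ with $m=|Z|\geq n$, I would trace the unique run of $M$ on $Z$, recording the states $q_0,q_1,\dots,q_m$, where $q_0$ is the initial state and $q_{t+1}$ is the state reached after reading $z_t$; since $Z\in L$, the state $q_m$ is accepting.

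The key observation is that among the first $n+1$ states $q_0,q_1,\dots,q_n$ there are only $n$ possible values, so by the pigeonhole principle two of them coincide: there exist $0\leq i<j\leq n$ with $q_i=q_j$. With such $i,j$ fixed I would set $U=z_0\cdots z_{i-1}$, $V=z_i\cdots z_{j-1}$ and $W=z_j\cdots z_{m-1}$, so that $Z=UVW$, $|UV|=j\leq n$ and $|V|=j-i\geq1$, exactly the required constraints. Reading $V$ carries the automaton from $q_i$ back to $q_j=q_i$, so $V$ traces a loop at the state $q_i$. Consequently, reading $V$ any number of times (including zero) leaves $M$ in state $q_i$, and from there reading $W$ reaches the same accepting state $q_m$ as in the run on $Z$. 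Hence the run on $UV^kW$ ends at the accepting state $q_m$ for every $k\geq0$, giving $UV^kW\in L$ for all $k\geq0$.

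The argument is entirely elementary once the automaton is in hand; the only genuine content is the translation between the two descriptions of regularity. I therefore expect the main (and essentially the sole) obstacle to be the appeal to Kleene's theorem, namely the equivalence between languages built from finite sets by union, concatenation and Kleene star and languages recognised by finite automata. Since this equivalence is standard and available from the preliminaries, nothing beyond the pigeonhole count above is needed to complete the proof.
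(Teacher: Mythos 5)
Your proof is correct and is the canonical argument: pass to a deterministic finite automaton via Kleene's theorem, take the pumping constant to be the number of states, and apply the pigeonhole principle to the first $n+1$ states of the accepting run to extract a loop $V$ that can be repeated (or deleted) without leaving the accepting computation. The paper itself states this lemma without proof, treating it as a classical fact from the cited references, so there is no in-paper argument to compare against; your write-up supplies exactly the standard proof one would find in those references, and the only external ingredient you rely on, the equivalence of the closure-based definition of regularity with acceptance by a finite automaton, is indeed standard and legitimately invoked.
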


\section{Base-independent regular language}
Let $B\geq 2$ be an integer and $\mathbf{u}=\{u_i\}_{i\geq0}$ be an integer sequence with $u_i\in\{0,1,\cdots,B-1\}$ for all $i\geq0$. For any integer $b\geq2$, define a language $$L_b(\mathbf{u}):=\left\{\left([u_0u_1u_2\cdots u_n]_b\right)_b\in \Sigma_b^*:n\geq0\right\}.$$

In this section, we study the language $L_b(\mathbf{u})$ which is constructed from a sequence by base-changing. Note that if $B\leq b$, then $([u_0u_1u_2\cdots u_n]_b)_b=u_0u_1u_2\cdots u_n$ for any integer $n\geq0$. Using pumping lemma, Moshe proved in \cite{YM} that the language $\{u_0u_1u_2\cdots u_n:n\geq0\}$ is regular if and only if $\mathbf{u}$ is ultimately periodic.
In this paper, we prove that the regularity of $L_b(\mathbf{u})$ is independent of the base $b$.
\begin{theorem}\label{regular language}
For any integer $b\geq2$, the language $L_b(\mathbf{u})$ is regular if and only if the sequence $\mathbf{u}$ is ultimately periodic.
\end{theorem}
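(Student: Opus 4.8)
The plan is to analyse $L_b(\mathbf{u})$ through the integers $N_n:=[u_0u_1\cdots u_n]_b=\sum_{i=0}^{n}u_ib^{n-i}$, so that $L_b(\mathbf{u})=\{(N_n)_b:n\geq0\}$. The basic tool is the recurrence $N_{n+1}=bN_n+u_{n+1}$, which shows that $(N_{n+1})_b$ is obtained from $(N_n)_b$ by appending a digit and carrying. Equivalently, $(N_n)_b$ is the base-$b$ \emph{normalisation} of the digit string $u_0u_1\cdots u_n$, whose digits may exceed $b-1$ when $B>b$. Reading from the least significant digit, the normalisation carry $c$ obeys $c\mapsto\lfloor(u+c)/b\rfloor$ with $u\leq B-1$, and so never exceeds $K:=\lceil(B-1)/(b-1)\rceil$; hence normalisation is realised by a finite-state process whose states are the carry values $0,1,\ldots,K$. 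I will use repeatedly the resulting relation $u_j=d_j+b\,c_{j-1}-c_j$, where $d_j$ is the normalised digit produced at the position coming from $u_j$ and $c_j$ is the carry entering that position.

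For the direction that ultimate periodicity implies regularity, first note that the set of prefixes $\{u_0u_1\cdots u_n:n\geq0\}$ of an ultimately periodic word is a regular language over $\{0,\ldots,B-1\}$, being a finite union of sets of the form $v\,c^{*}c'$. Since $L_b(\mathbf{u})$ is the image of this prefix language under the bounded-carry normalisation described above (applied after reversing, so as to read least significant digits first, and reversing back), and since reversal and finite-state transductions preserve regularity (which also follows from the closure properties recalled in Section~2), $L_b(\mathbf{u})$ is regular. The degenerate case where $\mathbf{u}$ is ultimately $0$ is immediate, since then $N_n=Mb^{n-i_1}$ for a fixed $M$ and all large $n$, so $L_b(\mathbf{u})\subseteq(M)_b0^{*}$ up to finitely many words.

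For the converse I argue the contrapositive by the Pumping Lemma. Assume $L_b(\mathbf{u})$ is regular with pumping length $p_0$, and, excluding the trivial eventually-zero case, note that $N_n$ is eventually strictly increasing, so $n\mapsto(N_n)_b$ has strictly increasing lengths. Pumping a long word yields a decomposition $(N_n)_b=UVW$ with $|V|=\ell\geq1$ and $UV^iW\in L_b(\mathbf{u})$ for all $i\geq0$; write $UV^iW=(N_{m_i})_b$ with $m_i$ strictly increasing. Since $|UV|\leq p_0$, the block $V$ sits in the high-order part, which corresponds to small indices of $\mathbf{u}$. Thus, for every large $i$, the normalised digits of $(N_{m_i})_b$ are $\ell$-periodic over a long initial range.

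The heart of the proof, and the step I expect to be the main obstacle, is transferring this periodicity of the \emph{output} digits back to the \emph{input} sequence $\mathbf{u}$. Because $B$ may exceed $b$, equality of output digits only determines $u_j$ modulo $b$, so the carries must be controlled. Fixing $i>K+1$ and looking at the carries entering the $i$ block-boundaries of the periodic stretch, the pigeonhole principle over the finite set $\{0,\ldots,K\}$ produces two boundaries with equal incoming carry; between them the output digits agree and the boundary carries agree, so the relation $u_j=d_j+b\,c_{j-1}-c_j$ recovers the input digits uniquely and forces $\mathbf{u}$ to repeat on that stretch with a period that is a multiple of $\ell$ and at most $(K+1)\ell$. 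Letting $i\to\infty$, these periodic initial segments grow while their periods stay bounded, so one period eventually works for arbitrarily long initial segments, giving that $\mathbf{u}$ is ultimately periodic. Care is still needed to handle the bounded shift between digit positions and indices of $\mathbf{u}$ caused by extra leading digits from a final carry (pass to a subsequence of the $m_i$ on which this shift is constant) and the boundary situation where a carry sits exactly on an integer threshold; both are finite nuisances that do not affect the conclusion.
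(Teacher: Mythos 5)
Your forward direction is sound (and more explicit than the paper's one\-line dismissal), but the converse breaks at precisely the step you identify as the main obstacle: periodicity of the high-order normalised digits, together with equal carries at two block boundaries, does not recover the input digits. The relation $u_j=d_j+b\,c_{j-1}-c_j$ needs \emph{all} the intermediate carries, and these are not pinned down by the output digits and the two endpoint carries: for $b=2$, $B=4$ the input blocks $12$ and $20$ both normalise to $00$ with incoming carry $0$ and outgoing carry $1$. Worse, the part of the word you are pumping can carry no information at all. Take any sequence $c_0,c_1,c_2,\dots$ with $c_j\in\{1,2\}$ (say Thue--Morse, so not ultimately periodic), put $c_{-1}=2$ and $u_j:=2c_{j-1}-c_j\in\{0,1,2,3\}$. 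Then $\sum_{k\geq1}u_{j+k}2^{-k}$ telescopes to exactly $c_j$, the carry actually entering position $j$ in the normalisation of $u_0\cdots u_n$ equals $c_j-1$ for every $n>j$, and the resulting digit there is $(2c_{j-1}-1)\bmod 2=1$. Concretely $\mathbf{u}=3,0,2,3,\dots$ gives $N_0,\dots,N_3=3,6,14,31$, i.e.\ $11,110,1110,11111$: every word of $L_2(\mathbf{u})$ is a block of $1$'s followed by a suffix of bounded length, so $U$ and $V$ consist of $1$'s only and your pigeonhole on boundary carries can never force periodicity of $\mathbf{u}$. The ``carry sitting exactly on an integer threshold'' that you set aside as a finite nuisance is exactly this phenomenon, and here it occurs at every position.

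The distinguishing information lives in the bounded low-order suffix $W$, which your argument discards. That is where the paper works: it first proves $|(v_{n+1})_b|=|(v_n)_b|+1$ eventually, so that $L_b(\mathbf{u})$ has exactly one word of each large length; pumping one long word then forces $(v_{s_n})_b=V_0V_1^nV_2$ with $s_{n+1}=s_n+p$ and, crucially, with the \emph{same suffix} $V_2$ for every $n$, whence $[u_{s_n+1}\cdots u_{s_{n+1}}]_b=v_{s_{n+1}}-b^{p}v_{s_n}=[V_1V_2]_b-b^{p}[V_2]_b$ is a constant; a second application of the pumping lemma, to the preimage of $L_b(\mathbf{u})$ under a morphism sending single letters to the finitely many possible blocks $W_0,W_1,W_2$, then pins down the individual letters $u_{s_n+j}$. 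To repair your proof you must extract the suffix information (equivalently, the constancy of $v_{s_{n+1}}-b^{p}v_{s_n}$) rather than read the input off the pumped prefix.
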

\begin{corollary}
For any finite word $W\in \Sigma_b^*$, $L_b(\mathbf{u})$ is regular if and only if $L_b(W\mathbf{u})$ is regular.
\end{corollary}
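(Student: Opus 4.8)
The plan is to deduce the corollary directly from Theorem \ref{regular language}, which characterizes regularity of $L_b(\cdot)$ in terms of ultimate periodicity of the underlying sequence, rather than attempting to relate the two languages $L_b(\mathbf{u})$ and $L_b(W\mathbf{u})$ by a chain of closure operations (the latter route is awkward, since prepending $W$ replaces the integer $[u_0\cdots u_n]_b$ by $[W]_b\cdot b^{\,n+1}+[u_0\cdots u_n]_b$, and the resulting $b$-ary expansions are related only through carries). First I would note that if $W\in\Sigma_b^{\ast}$, then the concatenation $W\mathbf{u}$ is again an integer sequence all of whose terms lie in $\{0,1,\dots,B'-1\}$ for $B'=\max\{b,B\}\geq2$; hence Theorem \ref{regular language} applies verbatim to $W\mathbf{u}$ (with $B$ replaced by $B'$), giving that $L_b(W\mathbf{u})$ is regular if and only if $W\mathbf{u}$ is ultimately periodic.

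The crux is then the elementary observation that prepending or deleting a finite prefix does not affect ultimate periodicity. Concretely, write $k=|W|$ and suppose $\mathbf{u}=\{u_i\}_{i\geq0}$ is ultimately periodic, say $u_{i+p}=u_i$ for all $i\geq N$ and some period $p\geq1$. Then the shifted sequence $W\mathbf{u}$ satisfies the same recurrence past the index $N+k$, so it is ultimately periodic with the same period $p$. Conversely, if $W\mathbf{u}$ is ultimately periodic, then $\mathbf{u}$ is obtained from $W\mathbf{u}$ by removing the first $k$ terms, i.e.\ it is a tail of an ultimately periodic sequence, and any tail of an ultimately periodic sequence is again ultimately periodic. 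Thus $\mathbf{u}$ is ultimately periodic if and only if $W\mathbf{u}$ is.

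Combining these facts with Theorem \ref{regular language} yields the chain of equivalences
\begin{align*}
L_b(\mathbf{u})\text{ is regular} &\iff \mathbf{u}\text{ is ultimately periodic}\\
&\iff W\mathbf{u}\text{ is ultimately periodic}\\
&\iff L_b(W\mathbf{u})\text{ is regular},
\end{align*}
which is precisely the statement of the corollary. I do not expect any serious obstacle here, since the entire substance of the argument resides in Theorem \ref{regular language}; the only points requiring a little care are to verify that $W\mathbf{u}$ falls within the hypotheses of that theorem (by enlarging the value bound $B$ to $B'=\max\{b,B\}$), and to check that the equivalence between ultimate periodicity of $\mathbf{u}$ and of $W\mathbf{u}$ genuinely runs in both directions.
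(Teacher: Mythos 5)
Your proof is correct and follows exactly the route the paper intends (the corollary is stated without proof as an immediate consequence of Theorem \ref{regular language}): regularity of $L_b(\cdot)$ is equivalent to ultimate periodicity of the underlying sequence, and ultimate periodicity is unaffected by prepending or deleting a finite prefix. Your extra care in enlarging the alphabet bound to $B'=\max\{b,B\}$ so that Theorem \ref{regular language} applies to $W\mathbf{u}$ is a correct and worthwhile detail.
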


\begin{proof}[Proof of Theorem \ref{regular language} ]
If $\mathbf{u}$ is a ultimately periodic sequence, then we obtain easily that $L_b(\mathbf{u})$ is regular. Conversely, fix $b$ and assume $L_b(\mathbf{u})$ is regular, we will prove $\mathbf{u}$ is ultimately periodic.

Let $v_n=[u_0u_1u_2\cdots u_n]_b$ for all $n\geq0$. We claim that there exists an integer $N$ such that
\begin{equation}\label{1}
  |(v_{n+1})_b|=|(v_{n})_b|+1,~n>N.
\end{equation}

Set $C:=\lfloor\frac{B-1}{b-1}\rfloor+1$, choose a large integer $K$ satisfying $b^{K+1}-b+B\leq b^{K+2}-C$. If there exist infinitely many $N$ such that $|(v_{N+1})_b|-|(v_{N})_b|\geq2,$ then there exists an integer~$N$~such that~$b^{K-1} \leq v_N\leq b^K-1$~but~$v_{N+1}\geq b^{K+1}$. Hence we have~$b^{K+1}\leq v_{N+1}=bv_N+u_{N+1}\leq b^{K+1}-b+u_{N+1}\leq b^{K+1}-b+B\leq b^{K+2}-C.$ By the choice of $C$, it is easy to check that ~$b^{K+i} \leq v_{N+i}\leq b^{K+i+1}-C$ for all $i\geq1$. Thus,  the claim $(\ref{1})$~holds.

Claim $(\ref{1})$ tells us that for any large integer $n$, there is exactly one word with length $n$ in $L$. Since $L_b(\mathbf{u})=\{(v_n)_b:n\geq0\}$ is regular, by pumping lemma,
there exists an integer~$q$, for any word~$Z\in L_b(u)$ with $|Z|\geq q$, there exists a decomposition~$Z=V_0V_1V_2$~with $|V_1|\geq1,|V_0V_1|\leq q$, such that $V_0V_1^nV_2\in L_b(\mathbf{u})$ for all $n\geq0$. Hence, there exists an increasing integer sequence $\{s_n\}_{n\geq0}$ such that
\begin{equation}\label{2}
  (v_{s_n})_b=([u_0u_1u_2\cdots u_{s_n}]_b)_b=V_0V_1^nV_2,~n\geq0.
\end{equation}

Assume $|V_1|=p$, by claim $(\ref{1})$ and formula $(\ref{2})$, we have if $s_n\geq N$, then $s_{n+1}=s_n+p.$ Hence, for any large integer $n$, there exists $n_0$ and $0\leq i\leq p-1$ such that $n=s_{n_0}+i$. Note that $v_{s_{n_0}+i}=b^i v_{s_{n_0}}+[u_{s_{n_0}+1}\cdots u_{s_{n_0}+i}]_b$ and $(v_{s_{n_0}})_b=V_0V_1^{n_0}V_2$. Thus, each sufficient long word $(v_{s_{n_0}+i})_b$ has the form $W_0W_1^mW_2$, where $m\leq n_0$ is maximal and $|W_0|=|V_0|,|W_1|=|V_1|=p, |W_2|=(n_0-m)p+|V_2|+i.$

Since there are finitely many values of $[u_{s_{n_0}+1}\cdots u_{s_{n_0}+i}]_b$ for $0\leq i<p$, there are finitely many forms of $W_0,W_1,W_2$.  For  simplicity, we assume $p=2$ and $u_{s_{n}+1}\in\{\alpha,\beta\}\subseteq\{0,1,\cdots,B-1\}$. Suppose that $(v_{s_n})_b=V_0V_1^nV_2\subseteq \Sigma_b^*$ for all $n\geq0$, then we assume $(bv_{s_{n}}+\alpha)_b=U_0U_1^{m}U_2\subseteq \Sigma_b^*$ for some $m$ and $(bv_{s_{n}}+\beta)_b=Q_0Q_1^{m}Q_2\subseteq \Sigma_b^*$ for some $m$. Note that $|U_0|=|Q_0|=|V_0|$ and $|U_1|=|Q_1|=|V_1|=2$. Hence, each sufficiently long word of $L_b(\mathbf{u})$ can be written as one of following forms: $V_0V_1^nV_2,U_0U_1^{n}U_2$, or $Q_0Q_1^{n}Q_2$.

If a language differs only in finitely many elements from a regular language, then it is also regular. Hence, without loss of generality, we assume all words of $L_b(\mathbf{u})$ have one of following forms: $V_0V_1^nV_2,U_0U_1^{n}U_2$, or $Q_0Q_1^{n}Q_2$. Define a morphism $\sigma$ from $\{a,c,d,e,f,g,x,y,z\}$ to $\Sigma_b^*$ by $a\mapsto U_0, c\mapsto U_1, d\mapsto U_2, e\mapsto V_0, f\mapsto V_1, g\mapsto V_2, x\mapsto Q_0, y\mapsto Q_1, z\mapsto Q_2$. If $U_0U_1^{m}U_2\in L_b(\mathbf{u})$, then the word~$ac^md$~appears in~$\sigma^{-1}(L_b(\mathbf{u}))$. Note that $\sigma^{-1}(L_b(\mathbf{u}))$ is regular, since $L_b(\mathbf{u})$ is regular. By pumping lemma, if $ac^md\in \sigma^{-1}(L_b(\mathbf{u}))$, then there exists a decomposition~$ac^md=(ac^r)(c^s)(c^{m-r-s}d)$, and $(ac^r)((c^s)^n)(c^{m-r-s}d)\in\sigma^{-1}(L_b(\mathbf{u}))$ for all $n\geq0$, i.e.,
 $U_0U_1^{m+(n-1)s}U_2\in L_b(u)$ for all $n\geq0$. In other words, if $u_{s_{n_0}+1}=\alpha$ for some $n_0$, then $u_{s_{n_0}+2ns+1}=\alpha$ for all $n\geq0$, i.e., $\alpha$ appears periodically with period $2s$. Similarly, $\beta$ also appears periodically. Thus, $\{u_{s_{n}+1}\}_{n\geq0}$ is ultimately periodic, since  $s_{n+1}=s_n+2$. Note from formula $(\ref{2}$) that
 $$[u_{s_n+1}u_{s_{n+1}}]_b=[V_0V_1^{n+1}V_2]_b-b^2[V_0V_1^nV_2]_b=[V_1V_2]_b-b^2[V_2]_b$$
 is constant for $n$ is large enough. So~$\{u_{s_{n}+2}\}_{n\geq0}$ is also ultimately periodic. Hence the sequence~$\mathbf{u}=\{u_n\}_{n\geq0}$ is ultimately periodic.
\end{proof}

In fact, Theorem \ref{regular language} gives a method to prove that a language is non-regular. We give an example to end this section.
\begin{example}
Let $\mathbf{u}$ be a Thue-Morse block sequence over $\{A,B\}$, where $A=10,B=02$, i.e., $\mathbf{u}=u_0u_1u_2\cdots=1002021002101002\cdots$. Clearly, $\mathbf{u}$ is non-ultimately periodic. Hence, by Theorem \ref{regular language}, the language $L_2(\mathbf{u})$ is not regular.
\end{example}

\section{Proof of Theorem \ref{main}}
Let $\alpha,\beta\in\mathbb{R}$ and $b\geq2$ be an integer, assume $\beta<\alpha<b$. For any integer $k\geq1$, define~
\begin{equation}\label{rk}
  r_k:=\left\lfloor b\left\{\frac{b^k-\beta}{\alpha}\right\}+(b-1)\left\{\frac{\beta}{\alpha}\right\}\right\rfloor.
\end{equation}
In this section, we firstly study the non-negative integer sequence $\{r_k\}_{k\geq1}$.

Note that $\left\{\frac{\beta}{\alpha}\right\}=\frac{\beta}{\alpha}$. Assume $\frac{1}{\alpha},\frac{\beta}{\alpha}$ can be represented in the following forms:
$$\frac{1}{\alpha} := \left\lfloor \frac{1}{\alpha}\right\rfloor+\sum_{i\geq1}\frac{\alpha_i}{b^i},\quad \frac{\beta}{\alpha} := \sum_{i\geq1}\frac{\beta_i}{b^i},$$
where $\alpha_i,\beta_i \in \Sigma_b$ for all $i\geq1$.
For any integer~$k\geq0$, define a proposition, denoted by $P_k$,
$$P_k:~0.\alpha_{k+1}\alpha_{k+2}\cdots:=\sum_{i\geq1}\frac{\alpha_{k+i}}{b^i}\geq 0.\beta_{1}\beta_{2}\cdots:=\sum_{i\geq1}\frac{\beta_i}{b^i}.$$
Then its converse proposition, denoted by~$\overline{P_k}$, is
$$\overline{P_k}:~0.\alpha_{k+1}\alpha_{k+2}\cdots:=
\sum_{i\geq1}\frac{\alpha_{k+i}}{b^i}<0.\beta_{1}\beta_{2}\cdots:=\sum_{i\geq1}\frac{\beta_i}{b^i}.$$

\begin{lemma}\label{lemmar1}
For any integer~$k\geq1$, we have
$$
  r_k=\left\{
        \begin{array}{ll}
          \alpha_{k+1}, & \hbox{if~$P_k,P_{k+1}$~hold;} \\
          \alpha_{k+1}-1, & \hbox{if~$P_k,\overline{P_{k+1}}$~hold;} \\
          b+\alpha_{k+1}, & \hbox{if~$\overline{P_k},P_{k+1}$~hold;} \\
          b+\alpha_{k+1}-1, & \hbox{if~$\overline{P_k},\overline{P_{k+1}}$~hold.}
        \end{array}
      \right.
$$
\end{lemma}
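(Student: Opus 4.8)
The plan is to unpack the definition of $r_k$ from (\ref{rk}) by computing the two fractional parts $\left\{\frac{b^k-\beta}{\alpha}\right\}$ and $\left\{\frac{\beta}{\alpha}\right\}$ in terms of the base-$b$ digits $\alpha_i$ and $\beta_i$, and then tracking how the floor function resolves. First I would write
$$\frac{b^k-\beta}{\alpha}=b^k\cdot\frac{1}{\alpha}-\frac{\beta}{\alpha}=b^k\left(\left\lfloor\frac{1}{\alpha}\right\rfloor+\sum_{i\geq1}\frac{\alpha_i}{b^i}\right)-\sum_{i\geq1}\frac{\beta_i}{b^i}.$$
The term $b^k\lfloor 1/\alpha\rfloor$ is an integer and contributes nothing to the fractional part, and the sum $b^k\sum_{i\geq1}\alpha_i b^{-i}$ splits as an integer part $\sum_{i=1}^{k}\alpha_i b^{k-i}$ plus a tail $\sum_{i\geq1}\alpha_{k+i}b^{-i}=0.\alpha_{k+1}\alpha_{k+2}\cdots$. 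Thus, up to subtracting the integer $\sum_{i\geq1}\beta_i b^{-i}=\frac{\beta}{\alpha}$ (whose integer part is $0$ since $\beta<\alpha$ forces $0\le\beta/\alpha<1$), the fractional part $\left\{\frac{b^k-\beta}{\alpha}\right\}$ equals the fractional part of $0.\alpha_{k+1}\alpha_{k+2}\cdots-0.\beta_1\beta_2\cdots$. This is exactly where the propositions $P_k$ and $\overline{P_k}$ enter: whether $0.\alpha_{k+1}\alpha_{k+2}\cdots\geq 0.\beta_1\beta_2\cdots$ or not determines whether this subtraction stays in $[0,1)$ or requires a borrow of $1$.

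Next I would substitute into (\ref{rk}). Writing $A_k:=0.\alpha_{k+1}\alpha_{k+2}\cdots$ and $Y:=0.\beta_1\beta_2\cdots=\frac{\beta}{\alpha}$, the argument of the floor becomes
$$b\left\{A_k-Y\right\}+(b-1)Y,$$
where $\{A_k-Y\}=A_k-Y$ when $P_k$ holds and $=A_k-Y+1$ when $\overline{P_k}$ holds. Under $P_k$ the expression is $b A_k - bY+(b-1)Y=bA_k-Y$, while under $\overline{P_k}$ it is $bA_k-Y+b$; the extra $+b$ accounts precisely for the $b+\alpha_{k+1}$ terms in the two $\overline{P_k}$ cases. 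So the whole problem reduces to evaluating $\lfloor bA_k-Y\rfloor$ (and then adding $b$ in the $\overline{P_k}$ cases). Here $bA_k=\alpha_{k+1}+A_{k+1}$, so $bA_k-Y=\alpha_{k+1}+(A_{k+1}-Y)$, and since $\alpha_{k+1}$ is an integer,
$$\lfloor bA_k-Y\rfloor=\alpha_{k+1}+\lfloor A_{k+1}-Y\rfloor.$$
Now $A_{k+1},Y\in[0,1)$, so $A_{k+1}-Y\in(-1,1)$, giving $\lfloor A_{k+1}-Y\rfloor=0$ exactly when $A_{k+1}\ge Y$, i.e. when $P_{k+1}$ holds, and $\lfloor A_{k+1}-Y\rfloor=-1$ when $\overline{P_{k+1}}$ holds. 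Combining the two-way split on $P_k$ versus $\overline{P_k}$ with the two-way split on $P_{k+1}$ versus $\overline{P_{k+1}}$ yields exactly the four cases of the lemma.

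The main obstacle I expect is bookkeeping at the boundaries rather than any deep idea: I must confirm that the tail $A_k=0.\alpha_{k+1}\alpha_{k+2}\cdots$ genuinely equals $\left\{b^k/\alpha\right\}$ with no off-by-one error from the carrying in the base-$b$ expansion (in particular handling the possibility of an all-$(b-1)$ tail or the degenerate equality $A_{k+1}=Y$ correctly, where the convention $P_{k+1}$ uses $\ge$), and that subtracting the integer-valued quantities is legitimate inside the floor. I would also double-check the edge case where $A_k-Y$ or $A_{k+1}-Y$ hits $0$ exactly, ensuring the $\ge$ in the definition of $P_k$ assigns that case consistently. Once these fractional-part identities are pinned down, the four-way case analysis is a direct substitution, and the assumption $\beta<\alpha<b$ guarantees $0\le Y<1$ so that all the floor computations land in the claimed ranges.
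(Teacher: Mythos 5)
Your proposal is correct and follows essentially the same route as the paper: compute $\left\{\frac{b^k-\beta}{\alpha}\right\}$ as $0.\alpha_{k+1}\alpha_{k+2}\cdots-0.\beta_1\beta_2\cdots$ (with a borrow of $1$ under $\overline{P_k}$), substitute into the definition of $r_k$, and resolve the floor via the sign of $A_{k+1}-Y$, i.e.\ via $P_{k+1}$ versus $\overline{P_{k+1}}$. Your explicit algebra and attention to the edge cases (all-$(b-1)$ tails, the equality convention in $P_k$) is if anything more careful than the paper's "it is easy to check" presentation.
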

\begin{proof}
By the representations of $\frac{1}{\alpha},\frac{\beta}{\alpha}$, it is easy to check that for any~$k\geq1$,
$$
\left\{\frac{b^k-\beta}{\alpha}\right\}=\left\{
                      \begin{array}{ll}
                        0.\alpha_{k+1}\alpha_{k+2}\cdots-0.\beta_1\beta_2\cdots, & \hbox{if~$P_k$~holds,} \\
                        1.\alpha_{k+1}\alpha_{k+2}\cdots-0.\beta_1\beta_2\cdots, & \hbox{if~$\overline{P_k}$~holds,}
                      \end{array}
                    \right.
$$
which implies that
$$b\left\{\frac{b^k-\beta}{\alpha}\right\}+(b-1)\left\{\frac{\beta}{\alpha}\right\}=\left\{
                                             \begin{array}{ll}
                                               \alpha_{k+1}.\alpha_{k+2}\cdots-0.\beta_1\beta_2\cdots, & \hbox{if~$P_k$~holds,} \\
                                               b+\alpha_{k+1}.\alpha_{k+2}\cdots-0.\beta_1\beta_2\cdots, & \hbox{if~$\overline{P_k}$~holds.}
                                             \end{array}
                                           \right.
$$

Hence, $r_k=\alpha_{k+1}$ if and only if $P_k,P_{k+1}$~both hold. Similarly, the other cases can be easily obtained.
\end{proof}
\begin{remark}
$r_k\in \Sigma_{2b-1}$ for all integers~$k\geq1$. Moreover, if~$r_k=\alpha_{k+1}-1$, then~$\alpha_{k+1}\geq1$. Hence, $(\alpha_{k+1}-1)_b=\alpha_{k+1}-1.$
\end{remark}
\begin{proof}
If $r_k=\alpha_{k+1}-1,$ then both $P_k$ and $\overline{P_{k+1}}$ hold. Hence, $\sum_{i\geq1}\frac{\alpha_{k+i}}{b^i}\geq\sum_{i\geq1}\frac{\beta_i}{b^i}>\sum_{i\geq1}\frac{\alpha_{k+i+1}}{b^i}.$ So, we have $\sum_{i\geq1}\frac{\alpha_{k+i}-\alpha_{k+i+1}}{b^i}>0$, which implies that $\alpha_{k+1}\geq1.$
\end{proof}
\begin{lemma}\label{lemmar2}
Let $k\geq1$ be an integer, then we have
\begin{enumerate}
  \item $r_k\in\{\alpha_{k+1},b+\alpha_{k+1}\}$ if and only if~$r_{k+1}\in\{\alpha_{k+2},\alpha_{k+2}-1\}.$
  \item $r_k\in\{\alpha_{k+1}-1,b+\alpha_{k+1}-1\}$ if and only if~$r_{k+1}\in\{b+\alpha_{k+2},b+\alpha_{k+2}-1\}$.
\end{enumerate}
\end{lemma}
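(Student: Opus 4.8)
The plan is to deduce Lemma \ref{lemmar2} directly from the case analysis already established in Lemma \ref{lemmar1}, which pins down $r_k$ in terms of whether the propositions $P_k$ and $P_{k+1}$ hold. The key observation is that the four values listed for $r_k$ in Lemma \ref{lemmar1} split naturally according to which proposition governs the \emph{second} index $k+1$: namely $r_k\in\{\alpha_{k+1},\,b+\alpha_{k+1}\}$ corresponds exactly to the two rows where $P_{k+1}$ holds, and $r_k\in\{\alpha_{k+1}-1,\,b+\alpha_{k+1}-1\}$ corresponds exactly to the two rows where $\overline{P_{k+1}}$ holds. So the two-sided statements in parts (1) and (2) are really statements about a shared proposition linking consecutive indices, and my first step is to make this matching explicit.

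Concretely, I would prove part (1) as follows. Applying Lemma \ref{lemmar1} with index $k$, I observe that $r_k\in\{\alpha_{k+1},\,b+\alpha_{k+1}\}$ holds if and only if $P_{k+1}$ holds (the first and third rows), independently of whether $P_k$ or $\overline{P_k}$ holds. Next I apply Lemma \ref{lemmar1} with index $k+1$ in place of $k$: this expresses $r_{k+1}$ in terms of $P_{k+1}$ and $P_{k+2}$, and shows that $r_{k+1}\in\{\alpha_{k+2},\,\alpha_{k+2}-1\}$ holds if and only if $P_{k+1}$ holds (the first and second rows for the shifted index). Since both conditions are equivalent to the single proposition $P_{k+1}$, they are equivalent to each other, which is exactly part (1). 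Part (2) follows by the complementary argument: $r_k\in\{\alpha_{k+1}-1,\,b+\alpha_{k+1}-1\}$ is equivalent to $\overline{P_{k+1}}$, and $r_{k+1}\in\{b+\alpha_{k+2},\,b+\alpha_{k+2}-1\}$ is likewise equivalent to $\overline{P_{k+1}}$ by Lemma \ref{lemmar1} applied at index $k+1$; hence the two are equivalent. One can also note that part (2) is the logical negation of part (1) once one verifies the four value-sets are disjoint, so strictly only one part needs the direct argument.

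The main point requiring care is confirming that the four candidate values for each $r_k$ are genuinely distinct, so that the membership conditions ``$r_k\in\{\alpha_{k+1},b+\alpha_{k+1}\}$'' really do partition the cases as claimed rather than overlapping. This is where the Remark preceding the lemma is used: since $\alpha_{k+1}\in\Sigma_b=\{0,\dots,b-1\}$, the values $\alpha_{k+1}$ and $\alpha_{k+1}-1$ lie in $\{-1,0,\dots,b-1\}$ while $b+\alpha_{k+1}$ and $b+\alpha_{k+1}-1$ lie in $\{b-1,\dots,2b-1\}$, and the Remark guarantees $r_k\in\Sigma_{2b-1}$ so the problematic boundary overlaps do not create ambiguity in the classification. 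Once disjointness is in hand, the proof is purely a matter of reading off the two relevant rows of the table in Lemma \ref{lemmar1} for each index and matching them through the common proposition $P_{k+1}$; no further computation is needed.
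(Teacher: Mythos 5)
Your proposal is correct and is essentially the paper's own argument: the paper likewise reads off from Lemma \ref{lemmar1} that $r_k\in\{\alpha_{k+1},b+\alpha_{k+1}\}$ iff $P_{k+1}$ holds and $r_k\in\{\alpha_{k+1}-1,b+\alpha_{k+1}-1\}$ iff $\overline{P_{k+1}}$ holds, then links the two indices through the shared proposition $P_{k+1}$. Your explicit check that the four candidate values are pairwise distinct (which holds since $b\geq2$) is a detail the paper leaves implicit, but it does not change the route.
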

\begin{proof}
By Lemma \ref{lemmar1}, it is clear that for any integer $k\geq0$, $r_k\in\{\alpha_{k+1},\alpha_{k+1}-1\}$ if and only if $P_{k}$~holds, $r_k\in\{\alpha_{k+1},b+\alpha_{k+1}\}$ if and only if $P_{k+1}$~holds, $r_k\in\{\alpha_{k+1}-1,b+\alpha_{k+1}-1\}$ if and only if $\overline{P_{k+1}}$~holds, $r_k\in\{b+\alpha_{k+1},b+\alpha_{k+1}-1\}$ if and only if $\overline{P_{k}}$~holds.
\end{proof}
\begin{lemma}\label{lemmar3}
Let $k\geq1$ be an integer,  then we have
\begin{enumerate}
  \item If~$r_k\in\{\alpha_{k+1},b+\alpha_{k+1}\}$, then
  $$([r_1r_2\cdots r_k]_b)_b \in \{\alpha_2\cdots\alpha_{k}\alpha_{k+1}, 1\alpha_2\cdots\alpha_{k}\alpha_{k+1}\}.$$
  \item If ~$r_k=\alpha_{k+1}-1$, then
  $$([r_1r_2\cdots r_k]_b)_b \in \{\alpha_2\cdots\alpha_{k}(\alpha_{k+1}-1), 1\alpha_2\cdots\alpha_{k}(\alpha_{k+1}-1)\}.$$
  \item If~$r_k=b+\alpha_{k+1}-1$ with $\alpha_{k+1}\geq1$, then
  $$([r_1r_2\cdots r_k]_b)_b\in\{\alpha_2\cdots\alpha_{k}(\alpha_{k+1}-1),1\alpha_2\cdots\alpha_{k}(\alpha_{k+1}-1)\}.$$
  \item If~$r_k=b+\alpha_{k+1}-1$ with $\alpha_{k+1}=0$, then
  $$([r_1r_2\cdots r_k]_b+1)_b\in\{\alpha_2\cdots\alpha_{k}\alpha_{k+1},1\alpha_2\cdots\alpha_{k}\alpha_{k+1}\}.$$
\end{enumerate}
\end{lemma}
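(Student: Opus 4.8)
The plan is to reduce all four cases to a single telescoping identity for $[r_1r_2\cdots r_k]_b$. For each $j\geq 1$ define $\varepsilon_j=1$ if $\overline{P_j}$ holds and $\varepsilon_j=0$ if $P_j$ holds. Reading off Lemma~\ref{lemmar1}, the summand $b$ occurs in $r_j$ exactly when $\overline{P_j}$ holds, while the correction $-1$ occurs exactly when $\overline{P_{j+1}}$ holds; hence in every one of the four cases of Lemma~\ref{lemmar1} we may write uniformly
$$r_j=\varepsilon_j\, b+\alpha_{j+1}-\varepsilon_{j+1}.$$
This reformulation is the key step: the ``borrow'' $-\varepsilon_{j+1}$ attached to position $j$ is exactly the ``carry indicator'' $\varepsilon_{j+1}$ sitting at position $j+1$.

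Next I would substitute this into $[r_1\cdots r_k]_b=\sum_{j=1}^{k}r_j b^{k-j}$ and telescope. Since $\sum_{j=1}^k\alpha_{j+1}b^{k-j}=[\alpha_2\cdots\alpha_{k+1}]_b$, the remaining two sums $\sum_{j=1}^k \varepsilon_j b^{k-j+1}$ and $-\sum_{j=1}^k \varepsilon_{j+1}b^{k-j}$ cancel termwise at all interior powers $b^1,\dots,b^{k-1}$, leaving only the two endpoints, so that
$$[r_1\cdots r_k]_b=[\alpha_2\alpha_3\cdots\alpha_{k+1}]_b+\varepsilon_1 b^{k}-\varepsilon_{k+1}.$$
Because each $\alpha_i\in\Sigma_b$, we have $0\le[\alpha_2\cdots\alpha_{k+1}]_b\le b^k-1$, so the additive term $\varepsilon_1 b^k$ contributes precisely an optional leading digit $1$; this is what accounts for the two alternatives (with or without the prefix $1$) listed in every case, selected by the value $\varepsilon_1\in\{0,1\}$.

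It then remains to read off the last digit. If $P_{k+1}$ holds then $\varepsilon_{k+1}=0$ and the identity gives $[r_1\cdots r_k]_b=\varepsilon_1 b^k+[\alpha_2\cdots\alpha_{k+1}]_b$, whose base-$b$ expansion is $\alpha_2\cdots\alpha_{k+1}$ or $1\alpha_2\cdots\alpha_{k+1}$; as noted in the proof of Lemma~\ref{lemmar2}, the hypothesis $r_k\in\{\alpha_{k+1},b+\alpha_{k+1}\}$ is equivalent to $P_{k+1}$, which settles case (1). If instead $\overline{P_{k+1}}$ holds then $\varepsilon_{k+1}=1$ and $[r_1\cdots r_k]_b=\varepsilon_1 b^k-1+[\alpha_2\cdots\alpha_{k+1}]_b$. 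Here the only delicate point -- and the one I expect to be the main obstacle -- is subtracting the $1$ from the final digit $\alpha_{k+1}$. When $\alpha_{k+1}\ge1$ (cases (2) and (3), where the Remark guarantees $\alpha_{k+1}\ge1$ precisely when $r_k=\alpha_{k+1}-1$) no borrow occurs, so $-1+[\alpha_2\cdots\alpha_{k+1}]_b=[\alpha_2\cdots\alpha_k(\alpha_{k+1}-1)]_b$ and the expansion is $\alpha_2\cdots\alpha_k(\alpha_{k+1}-1)$ or its $1$-prefixed form. When $\alpha_{k+1}=0$ (case (4)) a direct subtraction would trigger a borrow, so instead one shifts the claim to $[r_1\cdots r_k]_b+1=\varepsilon_1 b^k+[\alpha_2\cdots\alpha_{k+1}]_b$, whose expansion is $\alpha_2\cdots\alpha_k\alpha_{k+1}$ or $1\alpha_2\cdots\alpha_k\alpha_{k+1}$, exactly as stated. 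This case split on $\varepsilon_{k+1}$ and on $\alpha_{k+1}$ exhausts the four alternatives and completes the argument.
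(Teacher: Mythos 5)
Your proof is correct, but it takes a genuinely different route from the paper. The paper proves the lemma by induction on $k$: at each step it invokes Lemma \ref{lemmar2} to determine which of the four forms $r_m$ had, splits into subcases according to the value of $r_{m+1}$, and tracks how appending the digit(s) of $r_{m+1}$ (together with a possible carry of $+1$ inherited from the inductive hypothesis, formula (\ref{f})) transforms the expansion. Your argument replaces all of this with the single closed-form identity
$$[r_1\cdots r_k]_b=\varepsilon_1 b^{k}+[\alpha_2\cdots\alpha_{k+1}]_b-\varepsilon_{k+1},$$
obtained by writing $r_j=\varepsilon_j b+\alpha_{j+1}-\varepsilon_{j+1}$ uniformly (which is a correct reading of Lemma \ref{lemmar1}) and telescoping; the four cases of the lemma then fall out as the four combinations of $\varepsilon_1\in\{0,1\}$ (the optional leading $1$) and $\varepsilon_{k+1}\in\{0,1\}$ together with whether $\alpha_{k+1}\geq 1$ permits the final subtraction without a borrow. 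I checked the telescoping and the case-by-case readout and they are sound; the appeal to the Remark for $\alpha_{k+1}\geq1$ in case (2) and to the explicit hypothesis in case (3) is exactly what is needed to rule out a borrow propagating leftward. Your version is arguably more transparent: it isolates \emph{why} there are exactly two alternatives in each case ($\varepsilon_1$) and \emph{why} the last digit is sometimes decremented ($\varepsilon_{k+1}$), whereas the paper's induction obscures this in the subcase bookkeeping. One shared caveat, which affects the paper's own statement equally and so is not a defect of your argument: if $\varepsilon_1=0$ and $\alpha_2=0$, the literal $b$-ary expansion drops leading zeros, so the conclusion must be read with the paper's implicit convention on leading zeros.
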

\begin{proof}
We prove this lemma by induction on~$k$. By Lemma \ref{lemmar1}, it is easy to check that the three assertions are true for~$k=1,2$. Now assume it is true for any~$k\leq m$, then we consider the case~$k=m+1$.
\begin{itemize}
  \item If $r_{m+1}\in\{\alpha_{m+2},\alpha_{m+2}-1\}$, then by Lemma \ref{lemmar2},~$r_m\in\{\alpha_{m+1},b+\alpha_{m+1}\}$.
        \begin{itemize}
          \item If $r_{m+1}=\alpha_{m+2}$, then  $(r_{m+1})_b=\alpha_{m+2}$ and $([r_1\cdots r_{m+1}]_b)_b=([r_1\cdots r_{m}]_b)_b(r_{m+1})_b\in\{\alpha_2\cdots\alpha_{m+2},1\alpha_2\cdots\alpha_{m+2}\}$ by induction.
          \item If $r_{m+1}=\alpha_{m+2}-1,$ then  $(r_{m+1})_b=\alpha_{m+2}-1$ and $([r_1\cdots r_{m+1}]_b)_b=([r_1\cdots r_{m}]_b)_b(r_{m+1})_b\in\{\alpha_2\cdots(\alpha_{m+2}-1),1\alpha_2\cdots(\alpha_{m+2}-1)\}$ by induction.
        \end{itemize}
  \item If $r_{m+1}\in\{b+\alpha_{m+2},b+\alpha_{m+2}-1\}$, then by Lemma \ref{lemmar2}, we have ~$r_m\in\{\alpha_{m+1}-1,b+\alpha_{m+1}-1\}$. By induction, we note that
      \begin{equation}\label{f}
        ([r_1r_2\cdots r_m]_b+1)_b\in\{\alpha_2\cdots\alpha_{m+1},1\alpha_2\cdots\alpha_{m+1}\}.
      \end{equation}
        \begin{itemize}
          \item If $r_{m+1}=b+\alpha_{m+2}$, then $(r_{m+1})_b=1\alpha_{m+2}$. Hence, by formula (\ref{f}), we have $([r_1\cdots r_{m+1}]_b)_b=([r_1\cdots r_{m}]_b+1)_b\alpha_{m+2}\in\{\alpha_2\cdots\alpha_{m+2},\\1\alpha_2\cdots\alpha_{m+2}\}$.
          \item If  $r_{m+1}=b+\alpha_{m+2}-1$ with $\alpha_{m+2}\geq1$, then $(r_{m+1})_b=1(\alpha_{m+2}-1)$. Hence, by formula (\ref{f}), we have $([r_1\cdots r_{m+1}]_b)_b=([r_1\cdots r_{m}]_b+1)_b(\alpha_{m+2}-1)\in\{\alpha_2\cdots(\alpha_{m+2}-1),1\alpha_2\cdots(\alpha_{m+2}-1)\}$.
          \item If  $r_{m+1}=b+\alpha_{m+2}-1$ with $\alpha_{m+2}=0$, then $(r_{m+1}+1)_b=1\alpha_{m+2}$. Hence, by formula (\ref{f}), we have $([r_1\cdots r_{m+1}]_b+1)_b=([r_1\cdots r_{m}]_b+1)_b\alpha_{m+2}\in\{\alpha_2\cdots\alpha_{m+2},1\alpha_2\cdots\alpha_{m+2}\}$.
        \end{itemize}
\end{itemize}
Thus, the assertions are true for $k=m+1$, which completes this proof.
\end{proof}
\begin{lemma}\label{periodic}
The sequence $\{r_k\}_{k\geq1}$ is ultimately periodic if and only if $\alpha$ is rational.
\end{lemma}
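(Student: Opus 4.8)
The plan is to prove the two implications separately, using quite different tools for each.

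For the direction ``$\alpha$ rational $\Rightarrow \{r_k\}$ ultimately periodic'' I would argue through the digits $\alpha_i$. Since $\alpha$ is rational, so is $1/\alpha$, and hence the base-$b$ expansion $\{\alpha_i\}_{i\geq1}$ of its fractional part is ultimately periodic, say with period $p$ from some index. Consequently the tails $t_k:=0.\alpha_{k+1}\alpha_{k+2}\cdots=\sum_{i\geq1}\alpha_{k+i}/b^i$ are eventually periodic in $k$ with the same period $p$, i.e.\ $t_{k+p}=t_k$ for all large $k$. The proposition $P_k$ asserts $t_k\geq\beta/\alpha$, a comparison of the eventually periodic quantity $t_k$ against the \emph{fixed} constant $\beta/\alpha$; therefore the truth value of $P_k$ is itself eventually periodic in $k$. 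Since $\alpha_{k+1}$ is eventually periodic too, Lemma \ref{lemmar1} expresses $r_k$ as a function of the triple $(\alpha_{k+1},P_k,P_{k+1})$, each entry of which is eventually periodic, so $\{r_k\}$ is ultimately periodic. I would stress that this argument never uses rationality of $\beta$: only $\alpha$ enters through the digits $\alpha_i$, while $\beta/\alpha$ merely serves as a fixed threshold.

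For the converse I would abandon the digit description and work with the integers $m_k:=\lfloor(b^k-\beta)/\alpha\rfloor$. The crucial observation is the one-step recursion $m_{k+1}=bm_k+r_k$. It follows from the identity $(b^{k+1}-\beta)/\alpha=b(b^k-\beta)/\alpha+(b-1)\beta/\alpha$ together with $\{\beta/\alpha\}=\beta/\alpha$: writing $(b^k-\beta)/\alpha=m_k+\theta_k$ with $\theta_k=\{(b^k-\beta)/\alpha\}\in[0,1)$, one gets $m_{k+1}=bm_k+\lfloor b\theta_k+(b-1)\{\beta/\alpha\}\rfloor=bm_k+r_k$, which matches the definition $(\ref{rk})$ of $r_k$ exactly.

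Now suppose $\{r_k\}$ is ultimately periodic, say $r_{k+p}=r_k$ for all $k\geq k_0$. Iterating the recursion across one full period gives $m_{k+p}=b^pm_k+R$ for all $k\geq k_0$, where $R:=[r_kr_{k+1}\cdots r_{k+p-1}]_b$ is independent of $k\geq k_0$ by periodicity. Solving this linear recursion along the progression $k_0,k_0+p,k_0+2p,\dots$ yields $m_{k_0+np}=b^{pn}\bigl(m_{k_0}+\tfrac{R}{b^p-1}\bigr)-\tfrac{R}{b^p-1}$. On the other hand, directly from the definition, $m_{k_0+np}=b^{pn}\tfrac{b^{k_0}}{\alpha}-\tfrac{\beta}{\alpha}-\theta_{k_0+np}$ with $\theta_{k_0+np}\in[0,1)$. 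Equating and isolating the factor $b^{pn}$ gives $b^{pn}\bigl(m_{k_0}+\tfrac{R}{b^p-1}-\tfrac{b^{k_0}}{\alpha}\bigr)=\tfrac{R}{b^p-1}-\tfrac{\beta}{\alpha}-\theta_{k_0+np}$. The right-hand side stays bounded as $n\to\infty$ while $b^{pn}\to\infty$, so the bracketed constant on the left must vanish, i.e.\ $b^{k_0}/\alpha=m_{k_0}+R/(b^p-1)\in\mathbb{Q}$; hence $\alpha=\tfrac{b^{k_0}(b^p-1)}{(b^p-1)m_{k_0}+R}$ is rational.

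The heart of the matter, and the step I expect to be the main obstacle, is exactly this passage from ``the digits $r_k$ repeat'' to ``the real number $\alpha$ is rational''. The clean device is the recursion $m_{k+1}=bm_k+r_k$, which converts periodicity of $\{r_k\}$ into an \emph{exact} first-order linear recursion for $m_k$ over each period; the only delicate point is then the boundedness bookkeeping, namely checking that the floor error $\theta_{k_0+np}$ together with the fixed constants $R/(b^p-1)$ and $\beta/\alpha$ stays bounded uniformly in $n$, which forces the exponentially growing coefficient of $b^{pn}$ to be zero. Once this vanishing is secured the rationality of $\alpha$ drops out at once, and it is worth noting that the value of $\beta$ need not be rational at any point in either direction.
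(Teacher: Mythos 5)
Your proof is correct, but both implications are argued differently from the paper. For ``$\alpha$ rational $\Rightarrow\{r_k\}$ ultimately periodic'' the paper does not go through the digits at all: it substitutes $\alpha=p/q$ directly into the defining formula $(\ref{rk})$ and observes that $b^i\equiv b^j\ (\mathrm{mod}\ p)$ forces $\{\tfrac{qb^i-q\beta}{p}\}=\{\tfrac{qb^j-q\beta}{p}\}$ and hence $r_i=r_j$, so periodicity follows from that of $b^k\bmod p$; your route via the eventual periodicity of the tails $t_k$ of the expansion of $1/\alpha$, the eventual periodicity of the truth values of $P_k$, and Lemma \ref{lemmar1} is equally valid but longer. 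For the converse the divergence is more substantial: the paper invokes Lemma \ref{lemmar3} to identify $([r_1\cdots r_k]_b)_b$ with (a slight perturbation of) the digit string $\alpha_2\cdots\alpha_{k+1}$, and transfers periodicity of $\{r_k\}$ to periodicity of $\{\alpha_k\}$, hence rationality of $1/\alpha$; you instead exploit the affine recursion $m_{k+1}=bm_k+r_k$ (which the paper itself verifies, but only later, as $c_{k+1}=bc_k+r_k$ in the proof of Theorem \ref{main}), solve it across one period to get $m_{k_0+np}=b^{pn}\bigl(m_{k_0}+\tfrac{R}{b^p-1}\bigr)-\tfrac{R}{b^p-1}$, and let $n\to\infty$ to force $b^{k_0}/\alpha=m_{k_0}+R/(b^p-1)\in\mathbb{Q}$. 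Your version makes Lemma \ref{lemmar3} (and its carry bookkeeping) unnecessary for this lemma and is self-contained and arguably cleaner; what the paper's digit-string approach buys is the explicit combinatorial description of $([r_1\cdots r_k]_b)_b$, which is of independent interest and fits the ``base-independent language'' theme of Section 3. Your remark that rationality of $\beta$ is never used is also consistent with the paper.
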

\begin{proof}
If~$r_k=b+\alpha_{k+1}-1$ with $\alpha_{k+1}=0$ for all $k\geq N$ for some $N$, then $\alpha_{k}$ is ultimately periodic and  $\alpha$ is rational.  If there exist infinitely many $k$ such that $r_k\neq b+\alpha_{k+1}-1$ with $\alpha_{k+1}=0$, then, by Lemma \ref{lemmar3},
\begin{eqnarray*}
% \nonumber to remove numbering (before each equation)
  ([r_1\cdots r_k]_b)_b &\in& \{\alpha_2\cdots\alpha_{k}\alpha_{k+1},\alpha_2\cdots\alpha_{k}(\alpha_{k+1}-1), \\
   && 1\alpha_2\cdots\alpha_{k}\alpha_{k+1},1\alpha_2\cdots\alpha_{k}(\alpha_{k+1}-1)\}.
\end{eqnarray*}
Hence, if the sequence $\{r_k\}_{k\geq1}$~is ultimately periodic, then the sequence $\{\alpha_k\}_{k\geq2}$ is also ultimately periodic, which implies that $\alpha$ is rational.

Conversely, if $\alpha=\frac{p}{q}$, where $p,q$ are integers and $(p,q)=1$, then, $$r_k=\left\lfloor b\left\{\frac{qb^k-q\beta}{p}\right\}+(b-1)\frac{q\beta}{p}\right\rfloor.$$
If $b^i\equiv b^j (\bmod~p)$, then $\left\{\frac{qb^i-q\beta}{p}\right\}=\left\{\frac{qb^j-q\beta}{p}\right\}$, i.e., $r_i=r_j$. Hence, the sequence $\{r_k\}_{k\geq1}$ is ultimately periodic.
\end{proof}

Using Lemma \ref{periodic} and Theorem \ref{regular language}, we are going to prove Theorem \ref{main}.
\begin{proof}[Proof of Theorem \ref{main}]
Let~$u_n=\lfloor\log_b(n\alpha+\beta)\rfloor$. Without loss of generality, we assume $\beta<\alpha<b$.  In fact,
if $\beta\geq \alpha$, i.e., $\beta=m\alpha+r$ for some integer $m\geq1$ and $0\leq r<\alpha$, then $\lfloor\log_b(n\alpha+\beta)\rfloor=\lfloor\log_b((n+m)\alpha+r)\rfloor$.
If $\alpha\geq b$, i.e., $b^m\leq \alpha<b^{m+1}$ for some integer $m\geq1$, then $\frac{\alpha}{b^m}<b$ and $\lfloor\log_b(n\alpha+\beta)\rfloor=\lfloor m+\log_b(n\frac{\alpha}{b^m}+\frac{\beta}{b^m})\rfloor=m+\lfloor \log_b(n\frac{\alpha}{b^m}+\frac{\beta}{b^m})\rfloor$.

Set $v_n:=u_{n+1}-u_n$, then
$$v_n\leq\log_b((n+1)\alpha+\beta)-\log_b(n\alpha +\beta)+1=\log_b\frac{(n+1)\alpha+\beta}{n\alpha +\beta}+1.$$
When $n$ tends to infinite, $\log_b\frac{(n+1)\alpha+\beta}{n\alpha +\beta}$ tends to $0$. Thus, there exits an integer~$N_0>1$ such that $v_n\leq 1$~for all~$n>N_0$, i.e.,~$v_n\in\{0,1\}$~for all~$n>N_0$.
It is easy to check that ~$v_n=1$ if and only if there exists an unique integer~$k$~satisfying
$k-1\leq\log_b(n\alpha+\beta)<k\leq\log_b((n+1)\alpha+\beta)<k+1$
if and only if $n\in[\frac{b^k-\beta}{\alpha}-1,\frac{b^k-\beta}{\alpha})$.
If there exist two integers $k_1,k_2$ such that $\frac{b^{k_1}-\beta}{\alpha}=m_1$ and $\frac{b^{k_2}-\beta}{\alpha}=m_2$ for some integers $m_1,m_2$, then $\alpha=\frac{b^{k_2}-b^{k_1}}{m_2-m_1}\in \mathbb{Q}$. Hence, we always assume $\frac{b^k-\beta}{\alpha}$ are not integers for all $k$ and ~$v_n=1$ if and only if $n=\lfloor\frac{b^k-\beta}{\alpha}\rfloor$.

If a sequence differs only in finitely many elements from a regular (resp. automatic) sequence, then it is also regular (resp. automatic).
Hence, without loss of generality, we assume
$$
v_n=\left\{
      \begin{array}{ll}
        1, & \hbox{if~$\exists~k\geq1$ such that $n=\lfloor\frac{b^k-\beta}{\alpha}\rfloor$;} \\
        0, & \hbox{othewise.}
      \end{array}
    \right.
$$

For any integer $k\geq1$, let~$c_k:=\lfloor\frac{b^k-\beta}{\alpha}\rfloor$. It is easy to check that $c_{k+1}=bc_k+r_{k}$, where
~$r_k$ is defined in formula (\ref{rk}). Note that $c_k\geq c_1\geq0$ for all $k\geq1$. Hence, for any integer $b,k\geq2$, we have $c_k=[c_1r_1\cdots r_{k-1}]_b.$

Define a language $L:=\{(c_k)_b:k\geq1\}$, then
\begin{equation}\label{regular}
  L=\left\{\left([c_1r_1r_2\cdots r_k]_b\right)_b:k\geq1\right\}\cup\left\{(c_1)_b\right\}.
\end{equation}

Note in \cite{Jp} that the running-sum of $k$-regular sequences is also $k$-regular and  a $k$-regular sequence  taking finitely many values if and only if it is $k$-automatic. Hence, by Theorem \ref{regular language}, Lemma \ref{periodic} and formula (\ref{regular}), we have
\begin{eqnarray*}
% \nonumber to remove numbering (before each equation)
  \{u_n\}_{n\geq0}~ is~regular &\Leftrightarrow& \{v_n\}_{n\geq0}~is~automatic \\
   &\Leftrightarrow& \{c_k:k\geq1\}~is~recognisable \\
   &\Leftrightarrow& L=\{(c_k)_b:k\geq1\}~is ~regular \\
   &\Leftrightarrow& \left\{\left([c_1r_1r_2\cdots r_k]_b\right)_b:k\geq1\right\}~ is~regular\\
   &\Leftrightarrow& \{r_k\}_{k\geq1}~ is~ ultimately~ periodic\\
   &\Leftrightarrow& \alpha\in\mathbb{Q},
\end{eqnarray*}
which completes our proof.
\end{proof}

For any integer $k\geq0$, define $f_k=\sharp\{n:\lfloor\log_{b}(\alpha n+\beta)\rfloor=k\}$, where~$\sharp$~denotes the cardinality of a set.
Then we have the following description of $f_k$.
 \begin{theorem}
The sequence~$\{f_{k+1}-bf_k\}_{k\geq0}$~is ultimately periodic if and only if~$\alpha\in\mathbb{Q}$.
\end{theorem}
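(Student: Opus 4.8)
The plan is to connect the counting function $f_k$ directly to the sequence $\{r_k\}_{k\geq1}$ studied in the previous section, and then invoke Lemma \ref{periodic} together with the closure of ultimate periodicity under the relevant operations. First I would reduce to the normalized case $\beta<\alpha<b$ exactly as in the proof of Theorem \ref{main}; the shifts and base adjustments there alter $f_k$ only in finitely many places and by a bounded additive amount, which does not affect ultimate periodicity of the second difference $f_{k+1}-bf_k$. With this reduction, $f_k=\sharp\{n:\lfloor\log_b(\alpha n+\beta)\rfloor=k\}$ counts the integers $n$ with $b^k\leq \alpha n+\beta<b^{k+1}$, i.e. $n\in\bigl[\tfrac{b^k-\beta}{\alpha},\tfrac{b^{k+1}-\beta}{\alpha}\bigr)$, so that $f_k=\lceil\tfrac{b^{k+1}-\beta}{\alpha}\rceil-\lceil\tfrac{b^k-\beta}{\alpha}\rceil$ (assuming, as before, that $\tfrac{b^k-\beta}{\alpha}$ is never an integer, since otherwise $\alpha$ would already be rational).

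The key computation is then to identify $f_{k+1}-bf_k$ with a fixed linear expression in $r_k$. Writing $c_k=\lfloor\tfrac{b^k-\beta}{\alpha}\rfloor$ as in the proof of Theorem \ref{main}, one has $f_k=c_{k+1}-c_k$ (up to the $\pm1$ coming from ceiling-versus-floor, which is constant under our non-integrality assumption). Using the recurrence $c_{k+1}=bc_k+r_k$ established in that proof, I would compute
\begin{equation}\label{fk-diff}
  f_{k+1}-bf_k=(c_{k+2}-c_{k+1})-b(c_{k+1}-c_k)=(bc_{k+1}+r_{k+1})-c_{k+1}-b(bc_k+r_k)+bc_k,
\end{equation}
and after substituting $c_{k+1}=bc_k+r_k$ and cancelling, the terms in $c_k$ disappear, leaving $f_{k+1}-bf_k$ equal to an affine function of $r_{k+1}$ and $r_k$ alone. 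Thus $\{f_{k+1}-bf_k\}_{k\geq0}$ is, up to finitely many initial terms, a fixed function of the pair $(r_k,r_{k+1})$.

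From here the equivalence is immediate in both directions. If $\alpha\in\mathbb{Q}$, then by Lemma \ref{periodic} the sequence $\{r_k\}_{k\geq1}$ is ultimately periodic, hence so is $\{(r_k,r_{k+1})\}_{k\geq1}$ and therefore so is any fixed function of these pairs, giving ultimate periodicity of $\{f_{k+1}-bf_k\}_{k\geq0}$. Conversely, if $\{f_{k+1}-bf_k\}_{k\geq0}$ is ultimately periodic, I would recover $\{r_k\}$ from the telescoping relation: since $f_k=c_{k+1}-c_k$, the second difference determines $r_{k+1}-br_k$ (equivalently $c_{k+2}-2bc_{k+1}+b^2c_k$) up to a constant, and because the $r_k$ lie in the finite alphabet $\Sigma_{2b-1}$ (see the Remark after Lemma \ref{lemmar1}), an ultimately periodic sequence of values $r_{k+1}-br_k$ forces $\{r_k\}$ itself to be ultimately periodic, whence $\alpha\in\mathbb{Q}$ by Lemma \ref{periodic}. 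The main obstacle I anticipate is bookkeeping the off-by-one ceiling/floor corrections in the definition of $f_k$ and verifying that they contribute only a bounded, eventually periodic perturbation; once \eqref{fk-diff} is seen to eliminate the growing $c_k$ terms, the reduction to the already-understood sequence $\{r_k\}$ is routine.
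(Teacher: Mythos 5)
Your proposal follows essentially the same route as the paper: reduce $f_k$ to $c_{k+1}-c_k$ (up to an index shift), use $c_{k+1}=bc_k+r_k$ to get $f_{k+1}-bf_k=r_{k+1}-r_k$, and then pass ultimate periodicity back and forth via Lemma \ref{periodic}, the converse resting on the fact that a bounded sequence whose difference sequence is ultimately periodic is itself ultimately periodic (which the paper proves by a pigeonhole on partial sums and you merely assert, though it is an easy fact). The one blemish is the parenthetical in your converse paragraph: $f_{k+1}-bf_k=c_{k+2}-(b+1)c_{k+1}+bc_k=r_{k+1}-r_k$, not $c_{k+2}-2bc_{k+1}+b^2c_k=r_{k+1}-br_k$; since your own displayed computation yields the correct expression, the argument is unaffected.
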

\begin{proof}
Let $u_n=\lfloor\log_{b}(\alpha n+\beta)\rfloor$ and $v_n=u_{n+1}-u_n$. By Theorem \ref{main}, there exists an integer~$n_0$~such that $v_n\in\{0,1\}$ for all~$n\geq n_0$.
Hence, there exists an integer~$m_0$ such that for any integer~$k>u_{n_0}$,
$f_k=c_{k+m_0+1}-c_{k+m_0}.$
Note that for any~$k\geq1$, $c_{k+1}-bc_k=r_k$, where $r_k$ is defined by formula \ref{rk}.
Thus, for any integer~$k>u_{n_0}$, we have
\begin{eqnarray}\label{d_n}
 %\nonumber to remove numbering (before each equation)
\nonumber  f_{k+1}-bf_k &=& c_{k+m_0+2}-c_{k+m_0+1}-b(c_{k+m_0+1}-c_{k+m_0})  \\
             &=&  r_{k+m_0+1}-r_{k+m_0}.
\end{eqnarray}

If~$\alpha\in\mathbb{Q}$, then, by Lemma \ref{periodic}, $\{r_k\}_{k\geq1}$ is ultimately periodic. Hence, by formula (\ref{d_n}),  The sequence~$\{f_{k+1}-bf_k\}_{k\geq0}$~is ultimately periodic.

On the other hand, let $d_k=f_{k+1}-bf_k$, assume $\{d_k\}_{k\geq0}$ is ultimately periodic, i.e., $\exists~N, p$ such that $d_{i+p}=d_i$ for all $i\geq N$. Since finite terms do not change the periodic property of sequence, by formula (\ref{d_n}), we assume $d_k=r_{k+m_0+1}-r_{k+m_0}$ for all $k\geq 0$. Hence, for any integer $k\geq0$, $r_{k+m_0+1}=r_{m_0}+\sum_{i\geq0}^{k}d_i$.

Now, consider the sequence $\{\sum_{i\geq0}^{k}d_i\}_{k\geq0}$. Since for any~$k\geq1$, $0\leq r_k\leq 2b-1$,  we assume $\sum_{i\geq0}^{k}d_i=r_{k+m_0+1}-r_m$  takes at most $M$ values. Then, there must be two  elements of $\sum_{i=0}^{N+jp-1}d_i(0\leq j\leq M)$ are the same. In other words, there exist two integers $0\leq s<t\leq M$ such that $\sum_{i=0}^{N+sp-1}d_i=\sum_{i=0}^{N+tp-1}d_i$. Hence, the sequence  $\{\sum_{i\geq0}^{k}d_i\}_{k\geq0}$ is ultimately periodic, which implies that $\{r_{k}\}_{k\geq1}$ is ultimately periodic. Thus, $\alpha\in\mathbb{Q}$.
\end{proof}

\medskip

\noindent\textbf{Acknowledgements. }
The authors wish to thank  Professor Zhi-Ying Wen for inviting them to visit
the Morningside Center of Mathematics, Chinese Academy of Sciences.
They also wish to thank Professor Qing-Hui Liu for his helpful suggestions

\medskip

\noindent\textbf{References. }


\begin{thebibliography}{99}
\bibitem{Jp} J.-P. Allouche and J. Shallit, \emph{The ring of $k$-regular sequences},
Theoret. Comput. Sci., 1992, 98: 163-197.

\bibitem{jp} J.-P. Allouche and J. Shallit, \emph{The ring of $k$-regular sequences,
II}, Theoret. Comput. Sci., 2003, 307: 3-29.

\bibitem{AS03} J.-P. Allouche and J. Shallit, \emph{Automatic sequences. Theory,
applications, generalizations}, Cambridge University Press, Cambridge, 2003.

\bibitem{Minimal} J. Bell, M. Coons, K. Hare, \emph{The minimal growth of a $k$-regular sequence}, Bull. Aust. Math. Soc., 2014, 90: 195-203.

\bibitem{Note} J. Bell, \emph{Note $p$-adic valuations and $k$-regular sequences}, Discrete Mathematics., 2007, 307: 3070-3075.

\bibitem{Bell} J. Bell, \emph{Automatic sequences, rational functions, and shuffles}, unpublished.

\bibitem{VM} V. Berth\'{e}, M. Rigo, \emph{Combinatorics, Automata and Number Theory}, Encyclopedia of Math. and its
Applications., vol. 135, Cambridge University Press, Cambridge, 2010.

\bibitem{CKM} G. Christol, T. Kamae, M. Mend\`{e}s France and G. Rauzy, \emph{Suites alg\'{e}briques, automates et substitutions}, Bull. Soc. Math. France., 1980, 108: 401-419.

\bibitem{Cob} A. Cobham, \emph{Uniform tag sequences}, Math. Systems Theory.,  1972, 6: 164-192.

\bibitem{F}  N.-P. Fogg, \emph{Substitutions in dynamics, arithmetics and combinatorics},
Lecture Notes in Math., vol. 1794, Springer-Verlag, Berlin, 2002.


\bibitem {GW14}  Y.-J. Guo, Z.-X. Wen, \emph{Automaticity of the Hankel determinants of difference sequence of the Thue-Morse sequence and rational approximation}, Theoret. Comput. Sci., 2014, 552: 1-12.


\bibitem{YM} Y. Moshe, \emph{On some questions regarding $k$-regular and
$k$-context-free sequences}, Theoret. Comput. Sci., 2008, 400: 62-69.

\bibitem{Row} E. Rowland, \emph{Non-regularity of $\lfloor\alpha+\log_k n\rfloor$}, Integers., 2010, 10(1): 19-23.

\bibitem{Sum} Z.-X. Wen, J.-M. Zhang, Wen. Wu, \emph{On the regular sum-free sets}, European J. Combin., 2015, 49: 42-56.

\bibitem{AFO} S. Zhang, J.-Y. Yao, \emph{Analytic functions over $\mathbb{Z}_p$ and $p$-regular sequences}, C. R. Acad. Sci. Paris, Ser. I., 2011, 349: 947-952.





\end{thebibliography}
\end{document}